\documentclass[12pt]{article}
\usepackage{amsmath}
\usepackage{graphicx,psfrag,epsf}
\usepackage{enumerate}
\usepackage{psfrag,epsf}
\usepackage{url} 
\usepackage{amsmath,amssymb,amsfonts,amsthm,mathtools,mathrsfs}

\usepackage{tikz}
\usepackage{pgfplots}
\usepgfplotslibrary{fillbetween}
\pgfplotsset{compat=1.16}

\usepackage{bm,bbm}
\usepackage{color}
\usepackage{subfigure}
\usepackage{algorithm,algpseudocode}
\usepackage[symbol]{footmisc}
\usepackage{scalefnt}
\usepackage{authblk} 
\usepackage{booktabs}
\usepackage{multirow,centernot}
\usepackage[colorlinks=true, citecolor=blue, urlcolor=blue]{hyperref}
\usepackage{natbib}
\usepackage{dsfont}
\usepackage[title]{appendix}
\input cyracc.def

\usepackage[left=1in,top=1.1in,right=0.5in,bottom=1in]{geometry}

\theoremstyle{definition}

\newtheorem{theorem}{Theorem}[section]

\newtheorem{example}{Example}[section]

\newtheorem{proposition}[theorem]{Proposition}
\newtheorem{remark}[theorem]{Remark}

\makeatletter
\def\@seccntformat#1{\@ifundefined{#1@cntformat}%
	{\csname the#1\endcsname\quad}
	{\csname #1@cntformat\endcsname}
}
\makeatother

\newif\ifShowComments
\ShowCommentstrue
\def\strutdepth{\dp\strutbox}
\def\druk#1{\strut\vadjust{\kern-\strutdepth
        {\vtop to \strutdepth{%
                \baselineskip\strutdepth\vss
                        \llap{\hbox{#1}\quad}\null}}}}

\title{\bf
The arithmetic-harmonic inequality index: Theory, inference, and finite-sample analysis
}

\author{
\text{Roberto Vila}$^{1}$\thanks{Corresponding author: Roberto Vila, email: {rovig161@gmail.com}
}
\,\, and
\text{Helton Saulo}$^{1,2}$ 
\\
{\small $^{1}$ Department of Statistics, University of Brasilia, Brasilia, Brazil}\\
{\small $^{2}$ Department of Economics, Federal University of Pelotas, Pelotas, Brazil}\\
}
\begin{document}
	\maketitle 	
\begin{abstract}
We investigate the arithmetic-harmonic inequality (AHI) index, a bounded and scale-invariant measure of dispersion for positive random variables, defined through the interplay between the mean and its reciprocal. We derive analytical expressions for the AHI index within the generalized inverse Gaussian (GIG) family, encompassing the inverse Gaussian and gamma distributions as important special cases. We study the associated estimator, obtain a tractable expression for its expectation, establish its asymptotic properties, and derive explicit first-order bias approximations. A Monte Carlo study is conducted to evaluate the finite-sample performance of the estimator under various scenarios. An application to GDP per capita data for countries in the Americas illustrates the role of the AHI index within the broader Atkinson family across several values of the inequality-aversion parameter. The results show the good performance of the AHI index as a tractable and interpretable measure of economic dispersion.
\end{abstract}
	\smallskip
	\noindent
	{\small {\bfseries Keywords.} {Generalized inverse Gaussian distribution, Inverse Gaussian distribution, Gamma distribution, arithmetic-harmonic inequality index, Monte Carlo simulation.}}
	\\
	{\small{\bfseries 
			JEL Classification
		} {C13 $\cdot$ C46 $\cdot$ D31.}}
%


\section{Introduction}

Let $X$ be a positive random variable defined on a probability space $(\Omega,\mathcal{F},\mathbb{P})$ such that
$
\mathbb{E}[X] \in (0,\infty)
$
and
$
\mathbb{E}\left[{1}/{X}\right] \in (0,\infty)
$. We consider the population index
\begin{align}\label{AHI-index}
	J =J(X)\equiv 1 - \frac{1}{\mathbb{E}[X] \mathbb{E}[1/X]}.
\end{align}
This index depends on the interaction between the mean and the reciprocal mean. Hence, for simplicity, this index is referred to as the arithmetic-harmonic inequality (AHI) index. 

From Jensen's inequality, $0 \leqslant J < 1$.
Moreover,
$
J = 0$ if and only if $X$ is degenerate random variable;
for any $c>0$,
$J(cX) = J(X)$; and
$J(X) = J(1/X)$.

Since
$
\mathbb{E}[X]\mathbb{E}[1/X] = \mathbb{E}\left[{X_1}/{X_2}\right]
$,
where \(X_1, X_2\) are independent and identically distributed (i.i.d.) copies of \(X\), the index \(J\) admits the ratio-based representation
$
J = 1 - 1/\mathbb{E}[X_1/X_2].
$
Hence, the AHI index measures multiplicative dispersion through pairwise ratios. Specifically,
$J$ quantifies the deviation of the average ratio from unity.
Furthermore, since \(H = 1/\mathbb{E}[1/X]\) denotes the harmonic mean, \(J\) can also be expressed in terms of means as
$
J = 1 - {H}/{ \mathbb{E}[X]}
$.

\smallskip 
Using a second-order Taylor expansion of \(x \mapsto 1/x\) about \(\mathbb{E}[X]\) and taking expectations, we obtain
\[
\mathbb{E}\left[{1}/{X}\right]
\approx
\frac{1}{\mathbb{E}[X]}
+
\frac{\operatorname{Var}(X)}{\mathbb{E}^3[X]}.
\]
Substituting into the definition of \(J\) yields
\begin{align}\label{connection-CV}
	J \approx \frac{\mathrm{CV}^2}{1+\mathrm{CV}^2},
\end{align}
where \(\mathrm{CV}^2 = \operatorname{Var}(X)/\mathbb{E}^2[X]\) is the squared coefficient of variation ($\mathrm{CV}^2$) \citep{Everitt1998}.

Thus, \(J\) is a monotone transformation of \(\mathrm{CV}^2\), behaving like $\mathrm{CV}^2$ for small variability  and approaching $1$ as \(\mathrm{CV}^2 \to \infty\). 
Hence, \(J\) is a normalized, scale-free measure of dispersion bounded in \([0,1)\).


{
By choosing \(\varepsilon = 2\) in the Atkinson index \citep{Atkinson1970}
\begin{align}\label{Atkinson-index}
A(\varepsilon)
=
1 - \frac{\mathbb{E}^{1/(1-\varepsilon)}[X^{1-\varepsilon}]}{\mathbb{E}[X]},
\quad 0 \leqslant \varepsilon \neq 1,
\end{align}
one recovers the AHI index in \eqref{AHI-index}. Relation \eqref{connection-CV} then shows that, for this value, the Atkinson index is closely related to \(\mathrm{CV}^2\) while remaining bounded in \([0,1)\), yielding an interpretable and scale-free measure of inequality directly linked to a classical dispersion metric. The limiting cases \(\varepsilon \to 1\) and \(\varepsilon \to \infty\) are discussed in \cite{Assis2025}.

From a structural perspective, the Atkinson family \(A(\varepsilon)\) is monotone in \(\varepsilon\), a property that follows directly from the ordering of power means. Indeed, writing $M_r(X)=\mathbb{E}^{1/r}[X^r]$,
and letting \(r=1-\varepsilon\), it is well known that \(M_r(X)\) is strictly increasing in \(r\) whenever \(X\) is not degenerate. Hence, for \(\varepsilon_1<\varepsilon_2\), we have \(1-\varepsilon_1>1-\varepsilon_2\), implying \(M_{1-\varepsilon_1}(X)>M_{1-\varepsilon_2}(X)\), and therefore \(A(\varepsilon_1)<A(\varepsilon_2)\). This monotonicity shows that larger values of \(\varepsilon\) place progressively more weight on the lower tail of the distribution. In particular, \(\varepsilon=2\) represents a natural threshold separating regimes of moderate sensitivity to inequality (\(\varepsilon<2\)) from those dominated by the lower tail (\(\varepsilon>2\)).

As such, the AHI index \(J\) provides a balanced measure that captures disparities among lower-income observations without the excessive instability associated with higher values of \(\varepsilon\). Moreover, compared to more extreme specifications of the Atkinson index, \(J\) exhibits improved statistical stability, making it particularly suitable for empirical applications. Finally, relative to the Gini coefficient, which reflects overall inequality, \(J\) places greater, but still controlled, emphasis on the lower tail, thereby complementing standard inequality measures.
}

In this paper, we develop a comprehensive study of the AHI index \eqref{AHI-index} and its estimator. First, we derive analytical expressions for $J$ under the generalized inverse Gaussian family \citep{Seshadri1997,Perreault1999}, including the inverse Gaussian and gamma distributions as important special cases. Second, we introduce the estimator based on sample arithmetic and harmonic means and obtain an exact integral representation for its expectation using Laplace transform techniques. Third, we establish large-sample properties, including strong consistency, asymptotic normality, and a first-order bias expansion. Fourth, we conduct a Monte Carlo simulation study to investigate the finite-sample behavior of the estimator under several parameter configurations. Finally, we present an application to real GDP per capita data for countries in the Americas, which situates the AHI index within the broader Atkinson family by computing $\widehat{A}(\varepsilon)$ for several values of $\varepsilon$ and confirming the monotonicity and threshold properties established in earlier sections.

The remainder of the paper is organized as follows. Section \ref{populato_index} investigates the population properties of the index and derives closed-form expressions under the generalized inverse Gaussian family, including important special cases. Section \ref{integral_form} develops an integral representation for the expectation of the estimator and provides explicit bias expressions. Section \ref{large_sample} establishes large-sample properties, including consistency, asymptotic normality, and a first-order bias expansion. Section \ref{simulation_study} presents a Monte Carlo study to assess the finite-sample performance of the estimator under different parameter configurations. Section \ref{application} presents an application to GDP per capita data for countries in the Americas. Finally, Section \ref{conclusion} contains some concluding remarks.


\section{Special cases of the index}\label{populato_index}

In this section, we investigate the behavior of the AHI index $J$, given in \eqref{AHI-index},
at the population level for important classes of positive distributions. The goal is to obtain explicit analytical expressions for $J$ and to understand how it depends on the underlying parameters of the model. Particular emphasis is given to the generalized inverse Gaussian (GIG) family, which provides a flexible and unifying framework encompassing the inverse Gaussian and gamma distributions as special cases.
\begin{example}[Generalized inverse Gaussian distribution]
	\label{GIG}
	A random variable $X$ is said to follow a $\mathrm{GIG}(p,a,b)$ distribution if its density function is given by
	\[
	f_X(x)
	=
	\frac{(a/b)^{p/2}}{2K_p(2\sqrt{ab})} \,
	x^{p-1}
	\exp\{-ax - b/x\}, \quad x>0; \ a,b>0, p\in\mathbb{R},
	\]
	where $K_p(\cdot)$ is the modified Bessel function of the second kind, defined by
	\[
	K_p(z)
	=
	\frac{1}{2}
	\left(\frac{z}{2}\right)^p
	\int_0^\infty x^{-p-1}
	\exp\left\{-x - \frac{z^2}{4x}\right\} {\rm d}x,
	\quad 
	z>0.
	\]
	
	For any $r \in \mathbb{R}$ such that the expectation exists,
	\begin{align}\label{exp-1}
	\mathbb{E}[X^r]
	=
	\left(\frac{b}{a}\right)^{r/2}
	\frac{K_{p+r}(2\sqrt{ab})}{K_p(2\sqrt{ab})}.
	\end{align}
	
	Taking $r=1$ and $r=-1$, we obtain
	\begin{align}\label{exp-2}
	\mathbb{E}[X]
	=
	\sqrt{\frac{b}{a}} \, 
	\frac{K_{p+1}(2\sqrt{ab})}{K_p(2\sqrt{ab})},
	\quad 
	\mathbb{E}\left[{1}/{X}\right]
	=
	\sqrt{\frac{a}{b}} \, 
	\frac{K_{p-1}(2\sqrt{ab})}{K_p(2\sqrt{ab})},
\end{align}
	respectively.
	
	Then
	\begin{align}\label{index-J}
		J
		=
		1
		-
		\frac{K_p^2(2\sqrt{ab})}
		{K_{p+1}(2\sqrt{ab})\,K_{p-1}(2\sqrt{ab})}.
	\end{align}
	
	The index $J$ depends only on the product $ab$ and the parameter $p$. The structure is entirely governed by ratios of modified Bessel functions.
\end{example}

\begin{example}[Inverse Gaussian distribution]\label{ex-1}
	Consider the parametrization \(p=-{1}/{2}\), \(a={\lambda}/{(2\mu^2)}\), and \(b={\lambda}/{2}\), with $\mu,\lambda>0$. Under this choice, \(X \sim \mathrm{IG}(\mu,\lambda)\).
	
	Using the identities
	\[
	K_{1/2}(z)
	=
	\sqrt{\frac{\pi}{2z}}\,\exp\{-z\},
	\quad
	K_{3/2}(z)
	=
	K_{1/2}(z)\left(1+\frac{1}{z}\right),
	\quad
	K_{-p}(z)=K_p(z),
	\]
	and substituting \(p=-{1}/{2}\), \(a={\lambda}/{(2\mu^2)}\), and \(b={\lambda}/{2}\) into \eqref{index-J}, it follows that
	\[
	J
	=
	\frac{\mu}{\mu+\lambda}.
	\]
	
	Hence, $J$ depends only on the ratio $\mu/\lambda$ and can be interpreted as a shape parameter.
\end{example}

\begin{example}[Gamma distribution]\label{ex-2}
	%
	Consider the limit \(b \to 0^+\) with the parametrization \(p=\alpha\) and \(a=\beta\), with $\alpha,\beta>0$. In this regime, the distribution reduces to \(X \sim \mathrm{Gamma}(\alpha,\beta)\). 
	
	Furthermore, by applying the asymptotic identity
	\begin{align}\label{lim-p}
	\lim_{b\to 0^+}
	\frac{K_p^2\!\left(2\sqrt{ab}\right)}
	{K_{p+1}\!\left(2\sqrt{ab}\right)\,K_{p-1}\!\left(2\sqrt{ab}\right)}
	=
	\frac{|p|-1}{|p|},
	\quad |p|>1,
	\end{align}
	with \(p=\alpha\), and taking the limit \(b \to 0^+\) in \eqref{index-J}, it follows that
	%
	\[
	J = \frac{1}{\alpha},
	\quad \alpha>1.
	\]
	
	The index $J$ depends only on the shape parameter $\alpha$ and is independent of the rate parameter $\beta$. In particular,
	$J \to 0$ as $\alpha \to \infty$,
	indicating low dispersion, while
	$J \to 1$ as $\alpha \downarrow 1$,
	indicating high dispersion.
	
	Moreover, since for the Gamma distribution the squared coefficient of variation satisfies
	$\mathrm{CV}^2= {1}/{\alpha}$,
	it follows that
	$
	J = \mathrm{CV}^2
	$.
	Thus, the index $J$ admits a direct interpretation as a classical dispersion measure in the Gamma case.
\end{example}

\begin{remark}\label{rem-1}
Note that the GIG distribution is a very rich family that contains several important distributions as special or limiting cases. In addition to the Inverse Gaussian and Gamma distributions, it also includes other relevant models such as the Inverse Gamma (obtained as $a \to 0^+$ with $p<0$), the Reciprocal Inverse Gaussian ($p=1/2$), and the Lévy distribution (arising as a limiting case of the Inverse Gaussian when $\mu \to \infty$). These latter distributions could likewise be analyzed in order to derive the corresponding index $J$ and the bias of its estimator; however, for the sake of simplicity, we omit these cases. 
\end{remark}

\section{Integral form of the expectation of the estimator}\label{integral_form}

Given a sample $X_1,\dots,X_n$ of $X$, we define the plugging estimator of AHI index $J$ as follows
\begin{align}\label{estimator-J}
\widehat{J} = 1 - \frac{1}{\overline X\,\overline{(1/X)}},
\end{align}
where
\[
\overline X = \frac{1}{n}\sum_{i=1}^n X_i,
\quad
\overline{(1/X)} = \frac{1}{n}\sum_{i=1}^n \frac{1}{X_i},
\]
denote the mean and harmonic samples, respectively.

It is clear that
$
0 \leqslant \widehat{J} < 1;
$
$\widehat{J} = 0$ if and only if $X_1=\cdots=X_n$; for any $c>0$,
$
\widehat{J}(cX_1,\dots,cX_n) = \widehat{J}(X_1,\dots,X_n)
$;
$\widehat{J}$ is invariant under permutations of the sample; 
$
\widehat{J}(X_1,\dots,X_n) = \widehat{J}\left({1}/{X_1},\dots,{1}/{X_n}\right)
$; and that
$\widehat{J}$ admits the ratio representation
\begin{align}\label{def-J}
	\widehat{J}
	=
	1
	-
	{n^2\over \sum_{i,j=1}^n {X_i}/{X_j}}.
\end{align}
{
\begin{remark}
	The monotonicity of the Atkinson index $A(\varepsilon)$ in \eqref{Atkinson-index} with respect to the inequality aversion parameter \(\varepsilon\) extends naturally to its empirical counterpart. Indeed, for any non-degenerate sample \(X_1,\dots,X_n\) of $X$, the corresponding plug-in estimator
	\[
	\widehat A(\varepsilon)
	=
	1 - \frac{\displaystyle\left(\frac{1}{n}\sum_{i=1}^n X_i^{1-\varepsilon}\right)^{\frac{1}{1-\varepsilon}}}{\overline X}
	\]
	is strictly increasing in \(\varepsilon\). This follows from the fact that the empirical power mean
	$
	M_r(X_1,\ldots,X_n) = (\sum_{i=1}^n X_i^r/n)^{1/r}
	$
	is strictly increasing in \(r\), and the transformation \(r=1-\varepsilon\) reverses the ordering. Consequently, both the population index \(A(\varepsilon)\) and its plug-in estimator \(\widehat A(\varepsilon)\) exhibit the same monotone behavior.
	
	In particular, since \(J = A(2)\), we obtain the ordering
	\[
	\varepsilon < 2 \;\Rightarrow\; A(\varepsilon) < J,
	\quad
	\varepsilon > 2 \;\Rightarrow\; A(\varepsilon) > J,
	\]
	and similarly at the sample level,
	\[
	\varepsilon < 2 \;\Rightarrow\; \widehat A(\varepsilon) < \widehat J,
	\quad
	\varepsilon > 2 \;\Rightarrow\; \widehat A(\varepsilon) > \widehat J,
	\]
	where $\widehat J$ is as given in \eqref{estimator-J}.
	Thus, \(J\) and its plug-in estimator \(\widehat J\) act as natural reference points within the Atkinson family, both at the population and empirical levels, separating regimes of lower and higher sensitivity to the lower tail of the distribution.
\end{remark}
}

\newpage

\begin{theorem}\label{integral representation}
Let  \(X_1,\dots,X_n\) be a non-degenerate sample of $X$.
An integral representation for the expectation of $\widehat{J}$ is given by
\begin{align*}
	\mathbb{E}[\widehat{J}]
	=
	1 
	- 
	n^2 
	\int_0^\infty \int_0^\infty
	\mathcal{L}^n_{X,1/X}(s,t)
	{\rm d}s {\rm d}t,
\end{align*}
where  $\mathcal{L}_{X,Y}(s,t)=\mathbb{E}[\exp\{-sX-tY\}]$ denotes the joint Laplace transform of $(X,Y)^\top$.
\end{theorem}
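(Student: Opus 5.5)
The idea is to write $1/(\overline X\,\overline{(1/X)})$ as an integral of an exponential. For $u,v>0$ we have the elementary identity
\[
\frac{1}{uv}=\int_0^\infty\int_0^\infty e^{-su-tv}\,{\rm d}s\,{\rm d}t .
\]
We apply it with $u=\sum_{i=1}^n X_i = n\overline X$ and $v=\sum_{i=1}^n 1/X_i = n\,\overline{(1/X)}$. Both are strictly positive almost surely because $X>0$. This gives
\[
\frac{1}{\overline X\,\overline{(1/X)}} = \frac{n^2}{uv}
= n^2\int_0^\infty\int_0^\infty \exp\Big\{-s\sum_{i=1}^n X_i - t\sum_{i=1}^n \frac{1}{X_i}\Big\}\,{\rm d}s\,{\rm d}t .
\]
The integrand factorizes over $i$ as $\prod_{i=1}^n \exp\{-sX_i-t/X_i\}$.

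Next we take expectations. Tonelli's theorem lets us exchange $\mathbb{E}$ with the double integral, since the integrand is nonnegative. Independence of the pairs $(X_i,1/X_i)$ across $i$ turns the expectation of the product into a product of expectations. Identical distribution makes each factor equal to $\mathcal{L}_{X,1/X}(s,t)=\mathbb{E}[\exp\{-sX-t/X\}]$. Hence
\[
\mathbb{E}\Big[\frac{1}{\overline X\,\overline{(1/X)}}\Big] = n^2\int_0^\infty\int_0^\infty \mathcal{L}^n_{X,1/X}(s,t)\,{\rm d}s\,{\rm d}t .
\]
Finally, linearity applied to $\widehat J = 1 - 1/(\overline X\,\overline{(1/X)})$ yields the stated representation.

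The only point requiring care is finiteness, so that subtracting from $1$ is meaningful. By the Cauchy--Schwarz inequality, $\overline X\,\overline{(1/X)}\geqslant 1$, so the random variable $1/(\overline X\,\overline{(1/X)})$ lies in $(0,1]$. Its expectation is therefore finite, and by the Tonelli identity above the double integral is also finite and at most $1/n^2$.

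No further integrability conditions on $X$ are needed. I therefore expect no real obstacle: the argument is just the identity $1/(uv)=\iint e^{-su-tv}$, Tonelli, and independence. The non-degeneracy assumption in the statement is not needed for the identity itself. It only ensures $\widehat J>0$ with positive probability, which is not used in the derivation.
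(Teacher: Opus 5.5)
Your proof is correct and follows the paper's argument: the identity $1/(uv)=\int_0^\infty\int_0^\infty e^{-su-tv}\,{\rm d}s\,{\rm d}t$ is applied with $u=\sum_i X_i$ and $v=\sum_i 1/X_i$ (the paper phrases this through the ratio form $\sum_{i,j}X_i/X_j$), followed by Tonelli and the i.i.d.\ factorization. Your Cauchy--Schwarz observation that $\overline X\,\overline{(1/X)}\geqslant 1$ shows the expectation is finite and the double integral is at most $1/n^2$; this is a useful detail that the paper leaves implicit.
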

\begin{proof}
Using the identity
\[
\frac{1}{ab}
= \int_0^\infty \int_0^\infty \exp\{-sa - tb\}\,{\rm d}s {\rm d}t,
\quad 
a,b>0,
\]
together with the representation of $\widehat{J}$ given in \eqref{def-J}, Tonelli's theorem yields
\begin{align*}
	\mathbb{E}[\widehat J]
	&= 
	1 
	- 
	n^2 
	\int_0^\infty \int_0^\infty
	\mathbb{E}\left[
	\exp\left\{-s\sum_{i=1}^n X_i - t\sum_{i=1}^n (1/X_i)\right\}
	\right] {\rm d}s {\rm d}t
	\\[0,2cm]
	&=
	1 
	- 
	n^2 
	\int_0^\infty \int_0^\infty
	\mathbb{E}^n\left[\exp\left\{-sX - t/X\right\}\right]
	{\rm d}s {\rm d}t,
\end{align*}
where the last equality follows from the fact that the sample $X_1,\dots,X_n$ is i.i.d. with the same distribution as $X$.  This completes the proof.
\end{proof}

\begin{example}[Generalized inverse Gaussian distribution]
	Let \(X \sim \mathrm{GIG}(p,a,b)\). Using the Bessel integral identity
	\[
	\int_0^\infty x^{p-1} \exp\{-Ax - B/x\}\,{\rm d}x
	=
	2\left(\frac{B}{A}\right)^{p/2}
	K_p\left(2\sqrt{AB}\right),
	\quad A,B>0,
	\]
	with the substitutions \(A=a+c\) and \(B=b+d\), we obtain
	\[
	\int_0^\infty x^{p-1} \exp\{-(a+c)x - (b+d)/x\}\,{\rm d}x
	=
	2\left(\frac{b+d}{a+c}\right)^{p/2}
	K_p\left(2\sqrt{(a+c)(b+d)}\right).
	\]
	
	It follows that the joint Laplace transform of \((X,1/X)^\top\) is given by
	\[
	\mathcal{L}_{X,1/X}(c,d)
	=
	\left(\frac{a}{a+c}\right)^{p/2}
	\left(\frac{b+d}{b}\right)^{p/2}
	\frac{K_p\left(2\sqrt{(a+c)(b+d)}\right)}
	{K_p\left(2\sqrt{ab}\right)}.
	\]
	
	Consequently, from Theorem \ref{integral representation}, upon making the change of variables 
	$x=\sqrt{(a+s)(b+t)}$ and $y=\sqrt{{(b+t)}/{(a+s)}}$, we deduce
	%
	%
	%
	%
	%
	%
	\[
	\mathbb{E}[\widehat{J}]
	=
	1
	-
	\frac{2n^2 \left({a}/{b}\right)^{np/2}}{K_p^n\left(2\sqrt{ab}\right)}
	\int_{\sqrt{ab}}^\infty
	x K_p^n(2x)
	\left(
	\int_{b/x}^{x/a} y^{np-1}\,{\rm d}y
	\right)
	{\rm d}x.
	\]
	
	If \(np \neq 0\), the inner integral evaluates to
	\[
	\int_{b/x}^{x/a} y^{np-1}\,{\rm d}y
	=
	\frac{1}{np}
	\left[
	\left(\frac{x}{a}\right)^{np}
	-
	\left(\frac{b}{x}\right)^{np}
	\right].
	\]
	
	Consequently,
	\begin{align}\label{e-J}
		\mathbb{E}[\widehat{J}]
		=
		1
		-
		\frac{2n \left({a}/{b}\right)^{np/2}}{p K_p^n\left(2\sqrt{ab}\right)} \,
		\int_{\sqrt{ab}}^\infty
		x K_p^n(2x)
		\left[
		\left(\frac{x}{a}\right)^{np}
		-
		\left(\frac{b}{x}\right)^{np}
		\right]
		{\rm d}x.
	\end{align}
	For \(n \geqslant 2\), the integral above does not admit a closed-form expression in general.
	
	By using \eqref{e-J} and Example \ref{GIG} it follows that the bias of $\widehat{J}$ with respect to $J$ can be expressed as 
	\[
	\mathrm{Bias}(\widehat{J},J)
	=
	-
	\frac{2n \left({a}/{b}\right)^{np/2}}{p K_p^n\left(2\sqrt{ab}\right)} \,
	\int_{\sqrt{ab}}^\infty
	x K_p^n(2x)
	\left[
	\left(\frac{x}{a}\right)^{np}
	-
	\left(\frac{b}{x}\right)^{np}
	\right]
	{\rm d}x
	+
	\frac{K_p^2(2\sqrt{ab})}
	{K_{p+1}(2\sqrt{ab})\,K_{p-1}(2\sqrt{ab})}.
	\]
\end{example}

\begin{example}[Inverse Gaussian distribution]
	In this case, setting \(p=-{1}/{2}\), \(a={\lambda}/{(2\mu^2)}\), and \(b={\lambda}/{2}\) in \eqref{e-J}, we obtain
	\begin{align*}
		\mathbb{E}[\widehat{J}]
		&=
		1
		+
		\frac{4n \mu^{n/2}}{ K_{1/2}^n\!\left(\lambda/\mu\right)} 
		\int_{\lambda/(2\mu)}^\infty
		x \, K_{1/2}^n(2x)
		\left[
		\left(\frac{x}{\lambda/(2\mu^2)}\right)^{-n/2}
		-
		\left(\frac{\lambda/2}{x}\right)^{-n/2}
		\right]
		\,{\rm d}x.
	\end{align*}
	
	Using the identities
	\[
	K_{1/2}(z)
	=
	\sqrt{\frac{\pi}{2z}}\,\exp\{-z\},
	\quad
	K_{-p}(z)=K_p(z),
	\]
	it follows that
	\begin{align*}
		\mathbb{E}[\widehat{J}]
		&=
		1
		+
		\lambda^{n/2} 2^{2-n/2} n \exp\left\{n\lambda\over\mu\right\}
		\int_{\lambda/(2\mu)}^\infty
		\exp\{-2nx\}
		\left[
		\left(\frac{\lambda}{2\mu^2}\right)^{n/2} x^{1-n}
		-
		\left(\frac{2}{\lambda}\right)^{n/2} x
		\right]
		\,{\rm d}x.
	\end{align*}
	
	Using the identity
	$
	\int_c^\infty x^{\alpha-1} \exp\{-\beta x\}\,{\rm d}x
	=
	\beta^{-\alpha}\,\Gamma(\alpha,\beta c),
	$
	we deduce
	\begin{align}\label{e-J-1}
		\mathbb{E}[\widehat{J}]
		=
		1
		+
		\lambda^{n/2} 2^{2-n/2} n \exp\left\{{n\lambda\over\mu}\right\}
		\left[
		\left(\frac{\lambda}{2\mu^2}\right)^{n/2}
		(2n)^{n-2}
		\Gamma\left(2-n,\frac{n\lambda}{\mu}\right)
		-
		\left(\frac{2}{\lambda}\right)^{n/2}
		\frac{1}{4n^2} \,
		\Gamma\left(2,\frac{n\lambda}{\mu}\right)
		\right],
	\end{align}
	where \(\Gamma(a,x)\) denotes the upper incomplete gamma function.
	
	Combining \eqref{e-J-1} with Example \ref{ex-1}, it follows that the bias of \(\widehat{J}\) relative to \(J\) can be written as
	\begin{align}\label{bias-IG}
		&\mathrm{Bias}(\widehat{J},J) \nonumber
		\\[0,2cm]
		&=
		\lambda^{n/2} 2^{2-n/2} n \exp\left\{{n\lambda\over\mu}\right\}
		\left[
		\left(\frac{\lambda}{2\mu^2}\right)^{n/2}
		(2n)^{n-2}
		\Gamma\left(2-n,\frac{n\lambda}{\mu}\right)
		-
		\left(\frac{2}{\lambda}\right)^{n/2}
		\frac{1}{4n^2} \,
		\Gamma\left(2,\frac{n\lambda}{\mu}\right)
		\right]
		+
		\frac{\lambda}{\mu+\lambda}.
	\end{align}
\end{example}

\begin{remark}
Since the Lévy distribution arises as a limiting case of the Inverse Gaussian when $\mu \to \infty$ (see Remark \ref{rem-1}), it follows, by taking the limit $\mu \to \infty$ in \eqref{bias-IG}, that the estimator $\widehat{J}$ is unbiased in this case.
\end{remark}

\begin{example}[Gamma distribution]
	In this case, setting \(b \to 0^+\) in \eqref{e-J} and using the limit
	\[
	\lim_{b\to 0^+}
	\frac{2n \left({a}/{b}\right)^{np/2}}{p K_p^n\left(2\sqrt{ab}\right)}
	=
	\frac{2^{n+1} n}{p \Gamma^n(p)} \, a^{np}, \quad  p>0,
	\]
	with  \(p=\alpha\) and \(a=\beta\),
	we obtain
	\begin{align*}
		\mathbb{E}[\widehat{J}]
		=
		1
		-
		\frac{2^{n+1} n}{\alpha \Gamma^n(\alpha)} \,
		\int_{0}^\infty
		x^{1+n\alpha}
		K_\alpha^n(2x) \, 
		{\rm d}x.
	\end{align*}
	Making the change of variables $y=2x$, we can write
	\begin{align}\label{e-J-2} 
		\mathbb{E}[\widehat{J}]
		=
		1
		-
		\frac{2^{n(1-\alpha)-1} n}{\alpha \Gamma^n(\alpha)} \,
		\int_0^\infty
		y^{1+n\alpha}
		K_\alpha^n(y)\, 
		{\rm d}y.
	\end{align}
	For general $n$, this integral does not reduce to elementary Gamma functions, but it admits a standard Mellin-Barnes (Meijer-G) representation.
	
	From \eqref{e-J-2} together with Example \ref{ex-2}, the bias of \(\widehat{J}\) with respect to \(J\) can be expressed as
	\begin{align*}
		\mathrm{Bias}(\widehat{J},J)
		=
		-
		\frac{2^{n(1-\alpha)-1} n}{\alpha \Gamma^n(\alpha)} \,
		\int_0^\infty
		y^{1+n\alpha}
		K_\alpha^n(y)\, 
		{\rm d}y
		+
		{\alpha-1\over\alpha},
		\quad \alpha>1.
	\end{align*}
\end{example}

\section{Large-sample properties of the estimator}\label{large_sample}

In what follows, we present several asymptotic properties of $\widehat{J}$. In particular, we show that $\widehat{J}$ is consistent and asymptotically normal under mild moment conditions. Moreover, its asymptotic variance depends on the second-order moments of $X$ and $1/X$.
\begin{proposition}[Strong consistency]
	Let $X_1,\dots,X_n$ be i.i.d. copies of a positive random variable $X$ such that
	$
	\mu=\mathbb{E}[X] < \infty$, 
	$\nu=\mathbb{E}[1/X] < \infty$.
	We have
	\[
	\widehat{J} \xrightarrow{\rm a.s.} J.
	\]
\end{proposition}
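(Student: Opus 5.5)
The plan is to apply Kolmogorov's strong law of large numbers separately to the two sample means appearing in \eqref{estimator-J}, and then pass to the limit through a continuous map.

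First, since $X_1,\dots,X_n$ are i.i.d. with $\mathbb{E}[X]=\mu<\infty$, the strong law gives $\overline X \to \mu$ almost surely. Next, the variables $1/X_1,\dots,1/X_n$ are i.i.d. copies of $1/X$, and $\mathbb{E}[1/X]=\nu<\infty$. Applying the strong law again gives $\overline{(1/X)} \to \nu$ almost surely. Intersecting the two probability-one events, the pair $(\overline X,\overline{(1/X)})$ converges almost surely to $(\mu,\nu)$.

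Second, note that $\mu>0$ and $\nu>0$, because $X$ is positive. Hence the limit point $(\mu,\nu)$ lies in $(0,\infty)^2$. On this set the map
\[
g(u,v)=1-\frac{1}{uv}
\]
is continuous. Moreover, $\widehat J=g(\overline X,\overline{(1/X)})$ holds for every $n$, since all observations are positive and so both sample means are strictly positive. The continuous mapping theorem, applied pathwise on the probability-one event above, then yields
\[
\widehat J \to g(\mu,\nu)=1-\frac{1}{\mu\nu}=J
\]
almost surely, which is the claim.

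There is no substantial obstacle. The only points needing care are that $\mu$ and $\nu$ are strictly positive, so that $g$ is continuous at the limit, and that the two strong laws hold on a common event of probability one; both follow immediately from positivity of $X$ and countable intersection of almost-sure events.
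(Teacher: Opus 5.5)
Your proof is correct and follows essentially the same route as the paper: the strong law is applied to $\overline X$ and $\overline{(1/X)}$, followed by continuity at the limit. The only difference is that the paper composes the product $\overline X\,\overline{(1/X)}$ with $t\mapsto 1-1/t$ instead of using your bivariate $g(u,v)=1-1/(uv)$, which is immaterial, and your explicit checks that $\mu,\nu>0$ and that the two almost-sure events intersect fill in details the paper leaves implicit.
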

\begin{proof}
	By the law of large numbers,
	$
	\overline X \xrightarrow{\rm a.s.} \mu
	$,
	$
	\overline{(1/X)} \xrightarrow{\rm a.s.} \nu.
	$
	Hence $\overline X \, \overline{(1/X)} \xrightarrow{\rm a.s.} \mu\nu$, and by continuity of  function $t\mapsto 1-{1}/{t}$, $t>0$, the proof follows.
\end{proof}

\begin{theorem}[Asymptotic normality]\label{Asymptotic normality}
	Let $X_1,\dots,X_n$ be i.i.d. copies of a positive random variable $X$ such that
	$
	\mathbb{E}[X^2] < \infty$, 
	$\mathbb{E}[1/X^2] < \infty$.
	We have
	\[
	\sqrt{n}(\widehat{J} - J)
	\;\xrightarrow{\mathscr{D}}\;
	N(0,\sigma_J^2),
	\]
	where $\xrightarrow{\mathscr{D}}$ denotes convergence in distribution, and
	\[
	\sigma_J^2
	\equiv 
	\frac{\mathrm{Var}(X)}{\mu^4 \nu^2}
	+ \frac{\mathrm{Var}(1/X)}{\mu^2 \nu^4}
	+ \frac{2\,\mathrm{Cov}(X,1/X)}{\mu^3 \nu^3},
	\quad 
	\mu = \mathbb{E}[X], \quad \nu = \mathbb{E}[1/X]. 
	\]
\end{theorem}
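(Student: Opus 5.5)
The plan is to apply the bivariate central limit theorem to the pair of sample means $(\overline X,\overline{(1/X)})$ and then transfer it to $\widehat J$ with the multivariate delta method. Set $Z_i=(X_i,1/X_i)^\top$, which are i.i.d. with mean $(\mu,\nu)^\top$. By assumption $\mathbb{E}[X^2]<\infty$ and $\mathbb{E}[1/X^2]<\infty$. The Cauchy--Schwarz inequality then gives $\mathbb{E}[X\cdot 1/X]=1<\infty$, although this cross moment is trivially finite anyway. Hence the covariance matrix
\[
\Sigma=\begin{pmatrix}\mathrm{Var}(X) & \mathrm{Cov}(X,1/X)\\ \mathrm{Cov}(X,1/X) & \mathrm{Var}(1/X)\end{pmatrix}
\]
is finite. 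The multivariate Lindeberg--L\'evy CLT yields
\[
\sqrt{n}\bigl((\overline X,\overline{(1/X)})^\top-(\mu,\nu)^\top\bigr)\xrightarrow{\mathscr{D}} N_2(0,\Sigma).
\]

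Next, write $\widehat J=g(\overline X,\overline{(1/X)})$ with $g(u,v)=1-1/(uv)$ on $(0,\infty)^2$. This map is continuously differentiable in a neighbourhood of $(\mu,\nu)$, since $\mu,\nu>0$. Its gradient at $(\mu,\nu)$ is
\[
\nabla g(\mu,\nu)=\left(\frac{1}{\mu^2\nu},\ \frac{1}{\mu\nu^2}\right)^\top.
\]
The delta method then gives $\sqrt n(\widehat J-J)\xrightarrow{\mathscr D}N(0,\nabla g^\top\Sigma\nabla g)$. Expanding the quadratic form produces exactly
\[
\frac{\mathrm{Var}(X)}{\mu^4\nu^2}+\frac{\mathrm{Var}(1/X)}{\mu^2\nu^4}+\frac{2\,\mathrm{Cov}(X,1/X)}{\mu^3\nu^3}=\sigma_J^2.
\]

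There is no real obstacle; the only points to check are the hypotheses of the delta method. First, $g$ must be differentiable at the limit point, which holds because $\mu\nu\ge1>0$ by Jensen. Second, the sample means must lie in the domain of $g$, which they do almost surely since $X>0$.

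A degenerate case is also worth remarking. If $\nabla g^\top\Sigma\nabla g=0$, as happens when $X$ is degenerate, the limit is the point mass $N(0,0)$, and the statement remains valid with this interpretation.
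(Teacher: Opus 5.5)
Your proposal is correct and matches the paper's proof: the bivariate CLT for $(\overline X,\overline{(1/X)})^\top$ with covariance matrix $\Sigma$ of $(X,1/X)^\top$, then the delta method with $g(u,v)=1-1/(uv)$ and gradient $\bigl(1/(\mu^2\nu),\,1/(\mu\nu^2)\bigr)^\top$. Your extra checks, differentiability at $(\mu,\nu)$ and the degenerate-limit remark, are harmless additions that the paper leaves implicit.
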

\begin{proof}
	Let $\boldsymbol{Z}=(\overline X, \overline{(1/X)})^\top$. By the multivariate central limit theorem,
	$
	\sqrt{n}\left\{
	\boldsymbol{Z}
	- 
	(\mu, \nu)^\top
	\right\}
	\;\xrightarrow{\mathscr{D}}\;
	N(0,\bf\Sigma)
	$,
	where
	$\bf \Sigma$ is the covariance matrix of vector $(X,1/X)^\top$, given by
	\begin{align*}
		\bf \Sigma =
		\begin{pmatrix}
			\mathrm{Var}(X) & \mathrm{Cov}(X,1/X)
			\\[0,2cm]
			\mathrm{Cov}(X,1/X) & \mathrm{Var}(1/X)
		\end{pmatrix}.
	\end{align*}
	
	Let
	$g(a,b) = 1 - {1}/{(ab)}$. 
	Applying the delta method,
	\[
	\sqrt{n}(J_n - J)
	=
	\sqrt{n}\left\{g(\boldsymbol{Z}) - g(\mu,\nu)\right\}
	\;\xrightarrow{\mathscr{D}}\;
	N(0,[\nabla g(\mu,\nu)]^\top {\bf \Sigma} \, \nabla g(\mu,\nu)),
	\]
	where
	$
	\nabla g(a,b)
	= \left({1}/{(a^2 b)}, {1}/{(a b^2)}\right)^\top
	$ is the gradient vector,
	the proof follows.
\end{proof}

\begin{remark}
	Note that the asymptotic variance $\sigma_J^2$ in Theorem \ref{Asymptotic normality} depends on the variability of $X$, the variability of $1/X$, anf the interaction between $X$ and $1/X$.
	The covariance term captures the dependence between levels and reciprocals, which plays a central role in the behavior of the index.
\end{remark}

%
%
%
%
%
%

\begin{theorem}[Asymptotic bias] \label{Asymptotic bias}
	Under the conditions and notations of Theorem \ref{Asymptotic normality}, we have the expansion
	\[
	\mathbb{E}[\widehat{J}]
	= J 
	-
	\frac{1}{n}
	\left[
	\frac{\mathrm{Var}(X)}{\mu^3 \nu}
	+ \frac{\mathrm{Var}(1/X)}{\mu \nu^3}
	+ \frac{\mathrm{Cov}(X,1/X)}{\mu^2 \nu^2}
	\right]
	+ 
	\operatorname{o}\!\left(\frac{1}{n}\right).
	\]
\end{theorem}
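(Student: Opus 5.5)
The plan is a second-order delta-method expansion of $g(a,b)=1-1/(ab)$ around $\boldsymbol m=(\mu,\nu)^\top$, as in the proof of Theorem \ref{Asymptotic normality}, but now taking expectations and keeping the quadratic term. Write $\boldsymbol Z=(\overline X,\overline{(1/X)})^\top$, so that $\widehat J=g(\boldsymbol Z)$ and $J=g(\boldsymbol m)$. The partial derivatives are
\[
g_a=\frac{1}{a^2b},\quad g_b=\frac{1}{ab^2},\quad g_{aa}=-\frac{2}{a^3b},\quad g_{bb}=-\frac{2}{ab^3},\quad g_{ab}=-\frac{1}{a^2b^2}.
\]
Taylor's theorem gives
\[
g(\boldsymbol Z)=g(\boldsymbol m)+\nabla g(\boldsymbol m)^\top(\boldsymbol Z-\boldsymbol m)+\tfrac12(\boldsymbol Z-\boldsymbol m)^\top \mathbf H(\boldsymbol Z-\boldsymbol m)+R_n ,
\]
where $\mathbf H$ is the Hessian at $\boldsymbol m$. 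Since $\mathbb E[\boldsymbol Z-\boldsymbol m]=0$ and $\mathbb E[(\boldsymbol Z-\boldsymbol m)(\boldsymbol Z-\boldsymbol m)^\top]={\bf\Sigma}/n$, the quadratic term contributes $\frac{1}{2n}\operatorname{tr}(\mathbf H{\bf\Sigma})$, which equals
\[
\frac{1}{2n}\left[-\frac{2\mathrm{Var}(X)}{\mu^3\nu}-\frac{2\mathrm{Var}(1/X)}{\mu\nu^3}-\frac{2\mathrm{Cov}(X,1/X)}{\mu^2\nu^2}\right].
\]
This is exactly the bracketed $1/n$ term in Theorem \ref{Asymptotic bias}. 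So everything reduces to showing $\mathbb E[R_n]=o(1/n)$.

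To control the remainder, I fix $\delta>0$ small enough that the closed ball $B_\delta(\boldsymbol m)$ stays inside $(0,\infty)^2$. On $B_\delta(\boldsymbol m)$ the third derivatives of $g$ are bounded, so
\[
|R_n|\le C\,|\boldsymbol Z-\boldsymbol m|^3\le C\,|\boldsymbol Z-\boldsymbol m|^2\min\{|\boldsymbol Z-\boldsymbol m|,\delta\}.
\]
Off the ball, I use that $g(\boldsymbol Z)=\widehat J\in[0,1)$ is bounded. Hence $|R_n|\le C'(1+|\boldsymbol Z-\boldsymbol m|^2)\le C''|\boldsymbol Z-\boldsymbol m|^2$ there, since $|\boldsymbol Z-\boldsymbol m|>\delta$. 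Both regions are therefore dominated by $n^{-1}\,W_n\,h_n$, where
\[
W_n:=n|\boldsymbol Z-\boldsymbol m|^2,\qquad h_n:=\min\{|\boldsymbol Z-\boldsymbol m|,\delta\}+\mathbf 1\{|\boldsymbol Z-\boldsymbol m|>\delta\}.
\]

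The key step, and the main obstacle, is to prove $\mathbb E[W_nh_n]\to 0$ using only second moments; a crude Chebyshev bound gives only $O(1/n)$. I would argue via uniform integrability of $\{W_n\}$. By the multivariate CLT, $W_n\xrightarrow{\mathscr D}|N(0,{\bf\Sigma})|^2$. Also $\mathbb E[W_n]=\operatorname{tr}{\bf\Sigma}$ exactly for every $n$, which is the mean of the limit. For nonnegative variables, convergence in distribution together with convergence of means implies uniform integrability.

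Next, $h_n$ is bounded by $\delta+1$ and $h_n\to0$ almost surely, by the strong consistency proposition. Uniform integrability of $W_n$ combined with boundedness of $h_n$ and $h_n\to0$ in probability gives $\mathbb E[W_nh_n]\to0$. To see this, split on $\{W_n\le K\}$, where the contribution is at most $K\,\mathbb E[h_n]\to0$ by bounded convergence, and on $\{W_n>K\}$, where the contribution is at most $(\delta+1)\sup_n\mathbb E[W_n\mathbf 1\{W_n>K\}]$, which is small for large $K$. Therefore $\mathbb E[R_n]=o(1/n)$, and collecting terms yields the stated expansion.
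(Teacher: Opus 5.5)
Your proposal is correct, and its main computation is the paper's proof: a second-order Taylor expansion of $g(a,b)=1-1/(ab)$ about $(\mu,\nu)$, then $\mathbb{E}[\boldsymbol{Z}-\boldsymbol{m}]=0$ and $\mathbb{E}[(\boldsymbol{Z}-\boldsymbol{m})(\boldsymbol{Z}-\boldsymbol{m})^\top]={\bf\Sigma}/n$, giving the term $\frac{1}{2n}\operatorname{tr}(H{\bf\Sigma})$.

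Where you go beyond the paper is the remainder. The paper writes ``$+\operatorname{o}(1/n)$'' inside the pointwise expansion \eqref{id-exp} and then takes expectations without justification. The remainder is random and only $O_p(n^{-3/2})$. Moreover, the derivatives of $g$ blow up near the boundary of $(0,\infty)^2$. So carrying that order over to the expectation requires control on the event where $\boldsymbol Z$ is far from $(\mu,\nu)$.

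Your argument supplies exactly this control, and each step checks out:
\begin{enumerate}
\item a cubic bound on $B_\delta(\boldsymbol m)$;
\item the global bound $0\leqslant\widehat J<1$ off the ball;
\item $\mathbb{E}[W_nh_n]\to 0$, via uniform integrability of $W_n=n|\boldsymbol Z-\boldsymbol m|^2$ together with boundedness of $h_n$ and $h_n\to0$ a.s.
\end{enumerate}
The uniform integrability comes from the CLT plus the fact that $\mathbb{E}[W_n]=\operatorname{tr}{\bf\Sigma}$ equals the mean of the limit.

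The result is a complete proof under exactly the second-moment hypotheses of Theorem \ref{Asymptotic normality}. The paper's version is shorter but leaves this step as an assertion.

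Two small points for the write-up. First, the boundedness of $\widehat J$ is what makes the off-ball region harmless with only second moments, so state why it holds: $\overline X\,\overline{(1/X)}\geqslant 1$. Second, $h_n\to0$ a.s.\ follows from the strong law applied to $\boldsymbol Z$ itself, as used inside the proof of the consistency proposition. It does not follow from the conclusion $\widehat J\to J$, which alone does not force $\boldsymbol Z\to\boldsymbol m$.
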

\begin{proof}
	Let $\boldsymbol{Z}_* = (\overline X - \mu,\ \overline{(1/X)} - \nu)^\top$, where $\mu = \mathbb{E}[X]$ and $\nu = \mathbb{E}[1/X]$, denote the centered version of $\boldsymbol{Z} = (\overline X,\ \overline{(1/X)})^\top$. Since $g(a,b) = 1 - {1}/{(ab)}$, a second-order Taylor expansion of $g(\overline X,\overline{(1/X)})$ about $(\mu,\nu)$ yields
	\begin{align}\label{id-exp}
		g(\overline X,\overline{(1/X)})
		= 
		g(\mu,\nu)
		+ 
		[\nabla g(\mu,\nu)]^\top
		\boldsymbol{Z}_*
		+ 
		\frac{1}{2} \, 
		\boldsymbol{Z}_*^\top
		H(\mu,\nu)
		\boldsymbol{Z}_*
		+ 
		\operatorname{o}\!\left(\frac{1}{n}\right),
	\end{align}
	where 	
	$
	\nabla g(a,b)
	= \left({1}/{(a^2 b)}, {1}/{(a b^2)}\right)^\top
	$ and
	\[
	H(a,b)=
	\begin{pmatrix}
		-\frac{2}{a^3 b} & -\frac{1}{a^2 b^2}
		\\[0,2cm]
		-\frac{1}{a^2 b^2} & -\frac{2}{a b^3}
	\end{pmatrix}
	\]
	is the Hessian matrix.
	
	Since
	$\widehat{J}= g(\overline X,\overline{(1/X)})$
	and
	$\mathbb{E}[\boldsymbol{Z}_*] = (0,0)^\top$, taking expectations in \eqref{id-exp} gives
	\[
	\mathbb{E}[\widehat{J}]
	= 
	g(\mu,\nu)
	+ 
	\frac{1}{2} \,
	\mathbb{E}\left[
	\mathbf{Z}_*^\top H(\mu,\nu)\mathbf{Z}_*
	\right]
	+ 
	\operatorname{o}\!\left(\frac{1}{n}\right).
	\]
	
	Combining the identities
	$
	\mathbb{E}[\mathbf{Z}_* \mathbf{Z}_*^\top] = {\bf\Sigma}/n
	$ 
	and
	$
	\mathbb{E}[\boldsymbol{Z}_*^\top H(\mu,\nu)\boldsymbol{Z}_*]
	= \operatorname{tr}\!\left(H(\mu,\nu)\,\mathbb{E}[\boldsymbol{Z}_*\boldsymbol{Z}_*^\top]\right),
	$
	we get
	$\mathbb{E}[\boldsymbol{Z}_*^\top H(\mu,\nu)\boldsymbol{Z}_*]=\operatorname{tr}\!\big(H(\mu,\nu){\bf \Sigma}\big)/n$. Hence, since  $J = g(\mu,\nu)$, the above identity can be written as
	\[
	\mathbb{E}[\widehat{J}]
	= 
	J
	+ 
	\frac{1}{2n}\operatorname{tr}\!\big(H(\mu,\nu){\bf \Sigma}\big)
	+ 
	\operatorname{o}\!\left(\frac{1}{n}\right).
	\]
	Finally, multiplying the Hessian matrix by the covariance matrix completes the proof.
\end{proof}
	
%
%
%
%
%
%
\begin{example}[Generalized inverse Gaussian distribution]
	Let \(X \sim \mathrm{GIG}(p,a,b)\). Combining the identities in \eqref{exp-1} and \eqref{exp-2}, we obtain
	\begin{align*}
	&\mathbb{E}[X]
=
\sqrt{\frac{b}{a}} \, 
\frac{K_{p+1}(2\sqrt{ab})}{K_p(2\sqrt{ab})},
\ 
\mathbb{E}\left[{1}/{X}\right]
=
\sqrt{\frac{a}{b}} \, 
\frac{K_{p-1}(2\sqrt{ab})}{K_p(2\sqrt{ab})},
\
	\mathrm{Var}(X)
	=
	\frac{b}{a}
\left[
\frac{K_{p+2}(2\sqrt{ab})}{K_p(2\sqrt{ab})}
-
\frac{K_{p+1}^2(2\sqrt{ab})}{K_p^2(2\sqrt{ab})}
\right],
\\[0,2cm]
	& 	
	\mathrm{Var}\left({1}/{X}\right)
	=
	\frac{a}{b}
	\left[
	\frac{K_{p-2}\left(2\sqrt{ab}\right)}{K_p\!\left(2\sqrt{ab}\right)}
	-
	\frac{K^2_{p-1}\left(2\sqrt{ab}\right)}{K_p^2\left(2\sqrt{ab}\right)}
	\right],
	\quad 
	\mathrm{Cov}(X,1/X) = 1
	-
	\frac{K_{p+1}(2\sqrt{ab})\,K_{p-1}(2\sqrt{ab})}
	{K_p^2(2\sqrt{ab})}.
	\end{align*}
	
	Theorem \ref{Asymptotic bias} and Example \ref{GIG} imply
\begin{align}\label{e-j}
	&\mathbb{E}[\widehat{J}]
	=
		1
-
\frac{K_p^2(2\sqrt{ab})}
{K_{p+1}(2\sqrt{ab})\,K_{p-1}(2\sqrt{ab})}
\nonumber
\\[0,2cm]
	&\!-\!
	\frac{1}{n}
	\frac{K_p^2(2\sqrt{ab})}{K_{p+1}(2\sqrt{ab})\,K_{p-1}(2\sqrt{ab})} \!
	\left[
	\frac{K_{p+2}(2\sqrt{ab})}{K_{p+1}^2(2\sqrt{ab})}
	\!+\!
	\frac{K_{p-2}(2\sqrt{ab})}{K_{p-1}^2(2\sqrt{ab})}
	\!+\!
	\frac{K_p^2(2\sqrt{ab})}{K_{p+1}(2\sqrt{ab})\,K_{p-1}(2\sqrt{ab})}
	-\!
	3
	\right]
	\!+\!
	\operatorname{o}\left(\frac{1}{n}\right).
\end{align}
\end{example}
\begin{example}[Inverse Gaussian distribution]
	Considering the parametrization \(p=-{1}/{2}\), \(a={\lambda}/{(2\mu^2)}\) and \(b={\lambda}/{2}\) in \eqref{e-j} together with the identities
\begin{align*}
	&K_{1/2}(z)
	=
	\sqrt{\frac{\pi}{2z}}\,\exp\{-z\},
	\quad
	K_{3/2}(z)
	=
	K_{1/2}(z)\left(1+\frac{1}{z}\right),
	\\[0,2cm]
&K_{5/2}(z)=K_{1/2}(z)\left(1+\frac{3}{z}+\frac{3}{z^2}\right),
	\quad
	K_{-p}(z)=K_p(z),
\end{align*}
	we obtain
%
%
%
\[
\mathbb{E}[\widehat{J}]
	= 
	\frac{\mu}{\mu+\lambda}
	- 
	\frac{1}{n}
	\,
	{\mu[(\mu+\lambda)^2+\mu\lambda]\over (\mu+\lambda)^3}
	+ 
	\operatorname{o}\left(\frac{1}{n}\right).
\]
\end{example}

\begin{example}[Gamma distribution]
	In this case, setting \(b \to 0^+\) in \eqref{e-j} and using \eqref{lim-p} together the limit 
\begin{multline*}
\lim_{b \to 0^+}
\frac{K_p^2(2\sqrt{ab})}{K_{p+1}(2\sqrt{ab})\,K_{p-1}(2\sqrt{ab})}
\left[
\frac{K_{p+2}(2\sqrt{ab})}{K_{p+1}^2(2\sqrt{ab})}
+
\frac{K_{p-2}(2\sqrt{ab})}{K_{p-1}^2(2\sqrt{ab})}
+
\frac{K_p^2(2\sqrt{ab})}{K_{p+1}(2\sqrt{ab})\,K_{p-1}(2\sqrt{ab})}
-3
\right]
\\[0,2cm]
=
\frac{p-1}{p(p-2)},
\quad p>2,
\end{multline*}
with  \(p=\alpha\) and \(a=\beta\),
we obtain
%
\[
\mathbb{E}[\widehat{J}]
	= 
	{1\over\alpha}
	-
	\frac{1}{n}\, \frac{\alpha-1}{\alpha(\alpha-2)}
	+ 
	\operatorname{o}\left(\frac{1}{n}\right),
	\quad 
	\alpha>2.
\]
\end{example}
%


\section{Simulation study}\label{simulation_study}

We conduct a Monte Carlo simulation study to assess the finite-sample behavior of the estimator of the arithmetic-harmonic inequality index, given by $\widehat J = 1 - 1/[\overline X\,\overline{(1/X)}]$, under the inverse Gaussian and gamma models. For each parameter configuration and each sample size $n$, we generate $R$ independent samples, compute the estimator $\widehat J$, and approximate its finite-sample bias and mean squared error by Monte Carlo averages. More precisely, if $J(\theta)$ denotes the true value of the index under parameter vector $\theta$, then the empirical bias and mean squared error are computed as $\widehat{\mathrm{Bias}}_{MC}(\widehat J) = R^{-1}\sum_{r=1}^R [\widehat J^{(r)} - J(\theta)]$ and $\widehat{\mathrm{MSE}}_{MC}(\widehat J) = R^{-1}\sum_{r=1}^R [\widehat J^{(r)} - J(\theta)]^2$, where $\widehat J^{(r)}$ is the estimator obtained from the $r$-th replication.

In the inverse Gaussian case, we assume $X \sim \mathrm{IG}(\mu,\lambda)$, with $\mu > 0$ and $\lambda > 0$, for which the population index is $J = \mu/(\mu+\lambda)$. In the gamma case, we assume $X \sim {\rm Gamma}(\alpha,\beta)$, with $\alpha > 2$ and $\beta > 0$, for which $J = 1/\alpha$. For the inverse Gaussian distribution, we consider the sample sizes $n \in \{20,50,75,100,200\}$ and the parameter configurations $(\mu,\lambda) \in \{(0.5,1),(1,1),(2,1),(5,1)\}$. These choices generate distinct values of the true index $J = \mu/(\mu+\lambda)$ and allow us to examine how the finite-sample performance changes with the degree of dispersion. For the gamma distribution, we consider the same sample sizes and the scenarios $(\alpha,\beta) \in \{(2.5,1),(3,1),(5,1),(10,1)\}$. Since in this model the index depends only on $\alpha$, these configurations represent different levels of inequality while keeping the rate parameter fixed. In all cases, the number of Monte Carlo replications is set to $R=4000$.

Figures~\ref{fig:ig_original} and \ref{fig:gamma_original} summarize the empirical bias and mean squared error of the estimator of the AHI index as functions of the sample size. As expected, the bias and mean squared error decrease with $n$ in all scenarios, thereby supporting the asymptotic results derived in Section~\ref{large_sample}.

\begin{figure}[!ht]
\centering
\includegraphics[width=.99\textwidth]{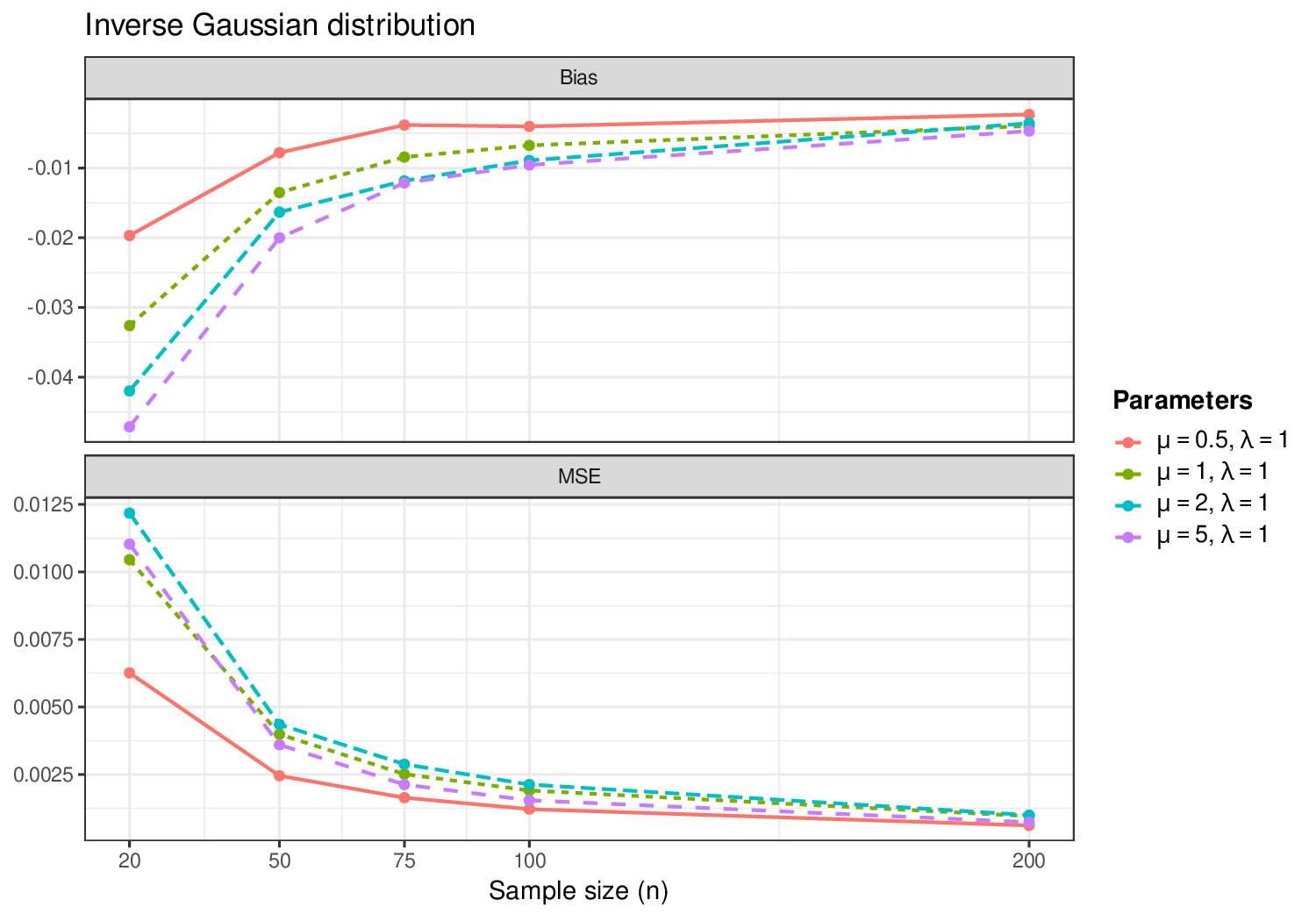}
\caption{Finite-sample bias and mean squared error of the  estimator of the AHI index under the inverse Gaussian distribution, for the parameter configurations $(\mu,\lambda) \in \{(0.5,1),(1,1),(2,1),(5,1)\}$ and sample sizes $n \in \{20,50,75,100,200\}$.}
\label{fig:ig_original}
\end{figure}

\begin{figure}[!ht]
\centering
\includegraphics[width=.99\textwidth]{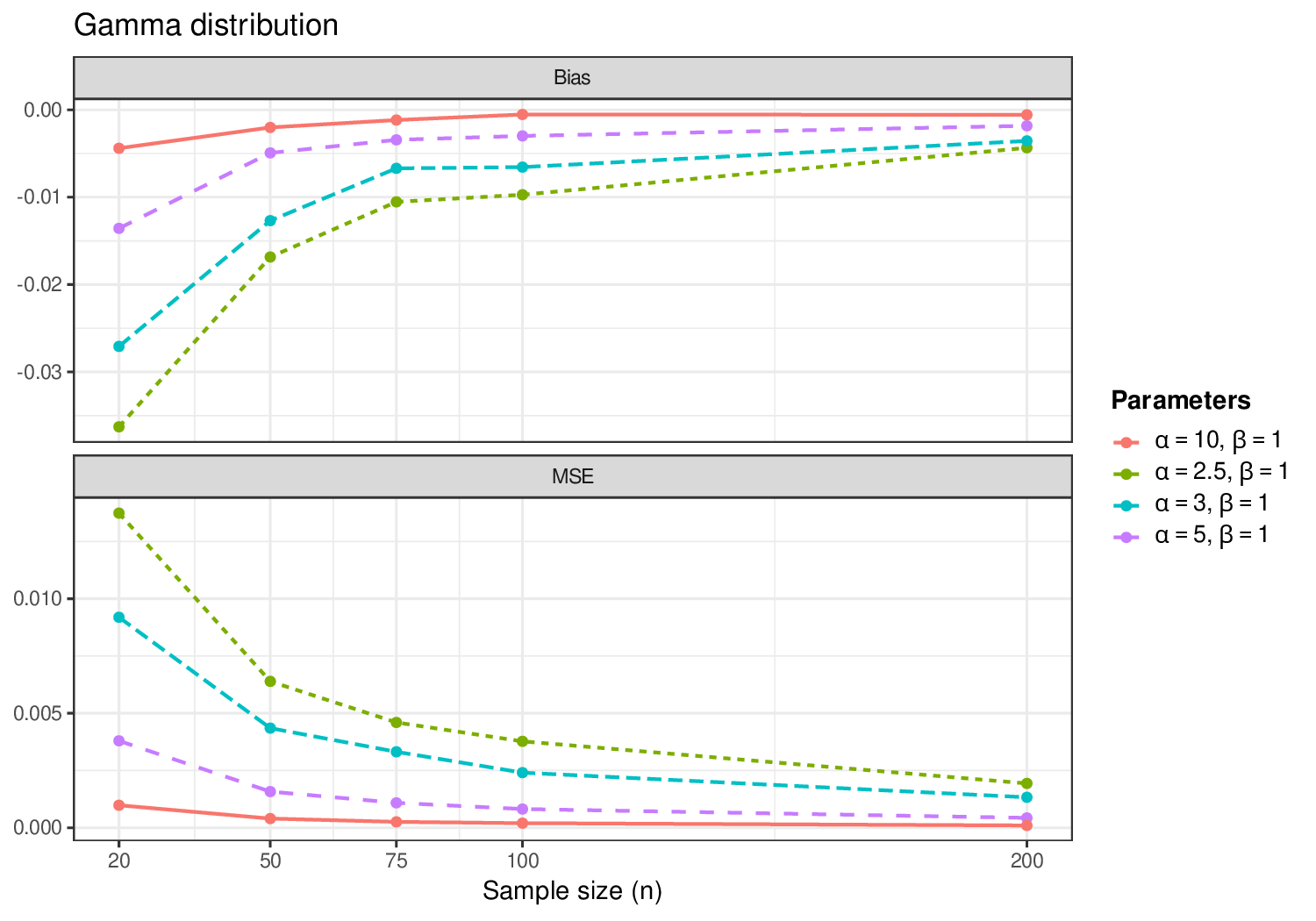}
\caption{Finite-sample bias and mean squared error of the estimator of the AHI index under the gamma distribution, for the parameter configurations $(\alpha,\beta) \in \{(2.5,1),(3,1),(5,1),(10,1)\}$ and sample sizes $n \in \{20,50,75,100,200\}$.}
\label{fig:gamma_original}
\end{figure}

\subsection{Comparison with the Atkinson index}

To place the AHI estimator within the broader Atkinson family, we compare its finite-sample bias and mean squared error against those of the plug-in Atkinson estimator
\[
\widehat{A}(\varepsilon)
=
1 - \frac{\displaystyle\left(\frac{1}{n}\sum_{i=1}^n X_i^{1-\varepsilon}\right)^{\!\frac{1}{1-\varepsilon}}}{\overline{X}}, \quad \varepsilon \neq 1,
\]
with the limiting formula $\widehat{A}(1) = 1 - \exp\!\bigl(n^{-1}\sum_{i=1}^n \log X_i\bigr)/\overline{X}$ as $\varepsilon \to 1$, for selected values $\varepsilon \in \{0.5, 1, 1.5, 2, 3, 5\}$. Since $\widehat{J} = \widehat{A}(2)$, the curve at $\varepsilon = 2$ coincides exactly with the AHI estimator. The true population values $A(\varepsilon)$ are used as benchmarks: under the gamma model they are computed analytically as
\[
A(\varepsilon) = 1 - \frac{1}{\alpha}\left(\frac{\Gamma(\alpha+1-\varepsilon)}{\Gamma(\alpha)}\right)^{\!\frac{1}{1-\varepsilon}}, \quad \alpha + 1 - \varepsilon > 0, \; \varepsilon\neq 1,
\]
and under the inverse Gaussian model they are obtained by numerical integration. For the gamma model, the moment $\mathbb{E}[X^{1-\varepsilon}]$ requires $\varepsilon < \alpha + 1$; consequently, the curves for $\varepsilon = 5$ are absent when $\alpha \in \{2.5, 3\}$.

A key structural feature of the Atkinson family governs the expected ordering of finite-sample errors. Writing $r = 1 - \varepsilon$, the estimator $\widehat{A}(\varepsilon)$ is based on the sample power mean $\widehat{M}_r = (n^{-1}\sum X_i^r)^{1/r}$. For $r < 0$ (i.e.\ $\varepsilon > 1$), this involves negative moments of $X$, which are increasingly sensitive to small observations as $r$ decreases (equivalently, as $\varepsilon$ increases). This translates into rising finite-sample bias and MSE as $\varepsilon$ grows beyond $1$. The AHI corresponds to $\varepsilon = 2$ ($r = -1$), which lies in the moderate-sensitivity regime. Accordingly, one expects:
\begin{itemize}
  \item for $\varepsilon < 2$: $|\mathrm{Bias}(\widehat{A}(\varepsilon))| < |\mathrm{Bias}(\widehat{J})|$ and $\mathrm{MSE}(\widehat{A}(\varepsilon)) < \mathrm{MSE}(\widehat{J})$ (estimators in this range are more stable than the AHI);
  \item for $\varepsilon > 2$: $|\mathrm{Bias}(\widehat{A}(\varepsilon))| > |\mathrm{Bias}(\widehat{J})|$ and $\mathrm{MSE}(\widehat{A}(\varepsilon)) > \mathrm{MSE}(\widehat{J})$ (the AHI outperforms these more extreme specifications).
\end{itemize}

Figures~\ref{fig:ig_vs_n} and \ref{fig:gamma_vs_n} display $|\mathrm{Bias}|$ and MSE as functions of $n$, with a diverging colour scheme centred at $\varepsilon = 2$: blue dashed lines for $\varepsilon < 2$, the solid red line for $\varepsilon = 2$ (AHI), and longdashed orange/brown lines for $\varepsilon > 2$. The ordering is unambiguous across all scenarios and both distributions: the blue family lies uniformly below the AHI curve, confirming that estimators with $\varepsilon < 2$ have smaller finite-sample error than the AHI; the orange/brown family lies uniformly above it, confirming that the AHI outperforms all choices with $\varepsilon > 2$. As $n$ grows all curves converge toward zero, corroborating the consistency established in Section~\ref{large_sample}.

\begin{figure}[!ht]
\centering
\includegraphics[scale=0.52]{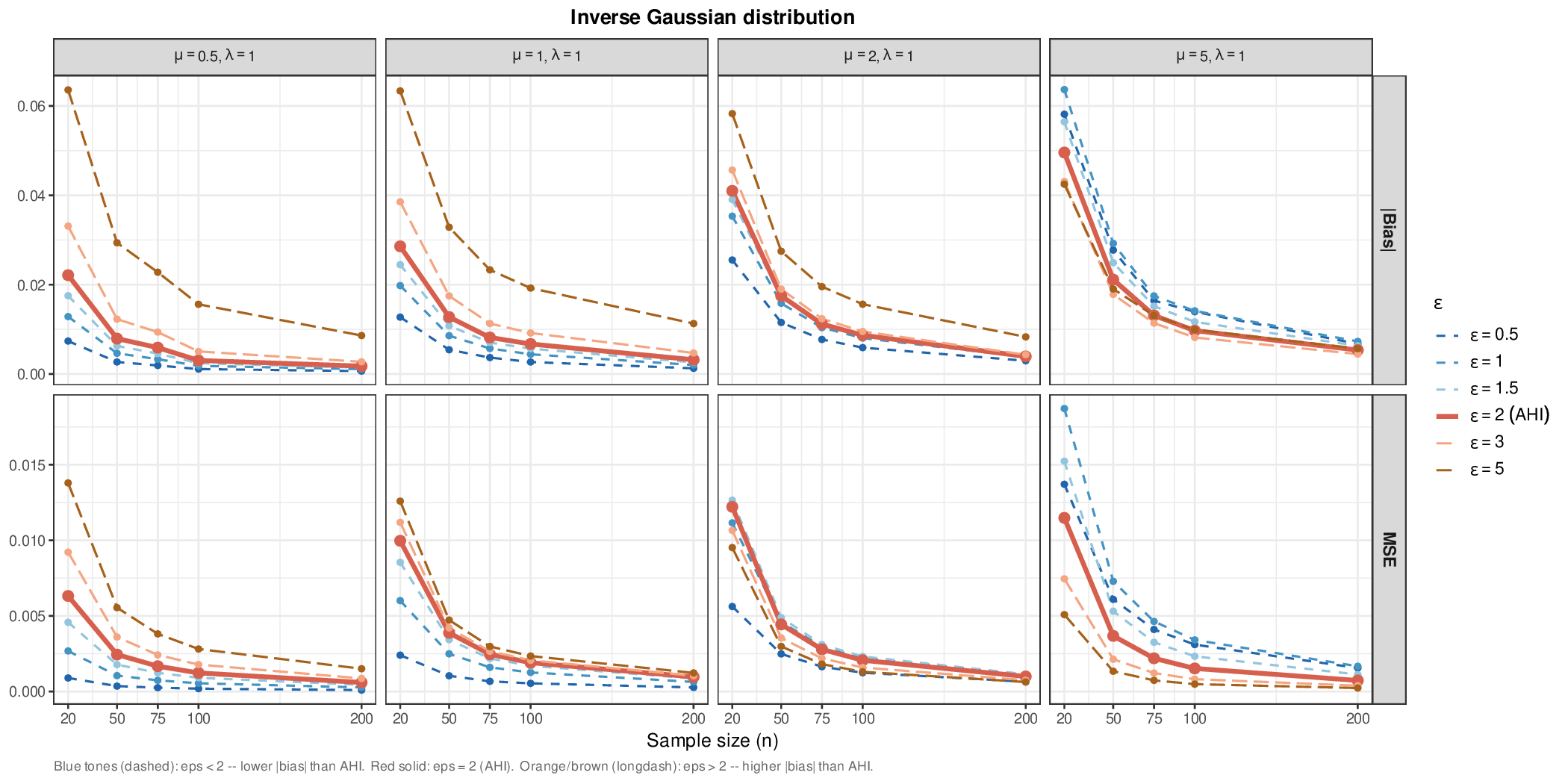}
\caption{Absolute bias and MSE of $\widehat{A}(\varepsilon)$ versus sample size $n$, under the inverse Gaussian distribution, for $\varepsilon \in \{0.5, 1, 1.5, 2, 3, 5\}$ and $(\mu,\lambda) \in \{(0.5,1),(1,1),(2,1),(5,1)\}$. Blue dashed lines ($\varepsilon < 2$): lower error than AHI; solid red line: $\varepsilon = 2$ (AHI, $\widehat{J}$); orange/brown longdashed lines ($\varepsilon > 2$): higher error than AHI.}
\label{fig:ig_vs_n}
\end{figure}

\begin{figure}[!ht]
\centering
\includegraphics[scale=0.52]{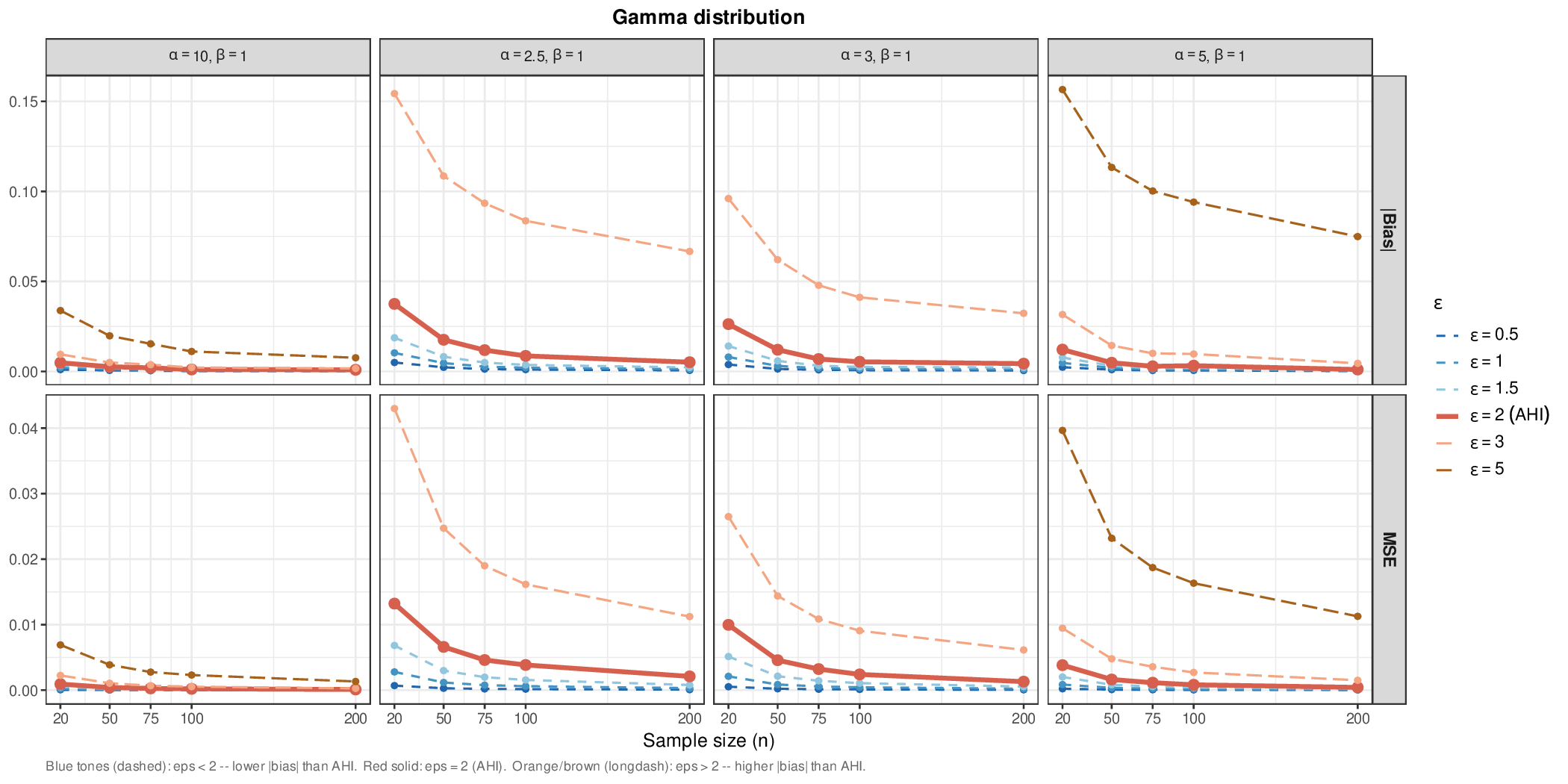}
\caption{Absolute bias and MSE of $\widehat{A}(\varepsilon)$ versus sample size $n$, under the gamma distribution, for $\varepsilon \in \{0.5, 1, 1.5, 2, 3, 5\}$ and $(\alpha,\beta) \in \{(2.5,1),(3,1),(5,1),(10,1)\}$. Blue dashed lines ($\varepsilon < 2$): lower error than AHI; solid red line: $\varepsilon = 2$ (AHI, $\widehat{J}$); orange/brown longdashed lines ($\varepsilon > 2$): higher error than AHI. For $\alpha \in \{2.5, 3\}$, the curve $\varepsilon = 5$ is absent because $\mathbb{E}[X^{1-\varepsilon}]$ does not exist.}
\label{fig:gamma_vs_n}
\end{figure}


\section{Application to real data}\label{application}

The data used in this section correspond to the Gross Domestic Product (GDP) per capita, at Purchasing Power Parity (PPP, constant 2021 international dollars), for countries in the Americas in 2023. The data were obtained from the World Bank Open Data platform \citep{WorldbankGDP2024} and, after retaining countries with positive and finite observations for the reference year, yielded a sample of $n = 34$ countries. The values are expressed in thousands of dollars to facilitate interpretation. They range from 2.96 thousand dollars (Haiti) to 74.58 thousand dollars (United States), reflecting the substantial economic heterogeneity within the continent.

Table~\ref{tab:desc-stats-americas} reports summary measures for the dataset. The mean GDP per capita is approximately 23.91 thousand dollars, whereas the median is 19.13 thousand dollars. The positive gap between mean and median, together with a skewness of approximately 1.40, confirms the right-skewed nature of the distribution, consistent with the presence of a few high-income economies at the upper tail. The large standard deviation (14.90 thousand dollars) further reinforces the substantial economic dispersion among the countries analyzed.

\begin{table}[h]
\centering
\caption{Summary statistics for the GDP per capita data (in thousands of dollars) for $n = 34$ countries in the Americas, 2023.}
\label{tab:desc-stats-americas}
\begin{tabular}{lr}
\toprule
\textbf{Measure} & \textbf{Value} \\
\midrule
$n$ & $34$ \\
Mean & $23.91$ \\
Median & $19.13$ \\
Standard deviation & $14.90$ \\
Skewness & $1.40$ \\
Minimum & $2.96$ \\
Maximum & $74.58$ \\
\bottomrule
\end{tabular}
\end{table}

Figure~\ref{fig:hist-gdp-americas} displays a histogram of the GDP per capita values. The distribution exhibits positive skewness with a heavy right tail, driven primarily by the United States and Canada, and a concentration of lower-income economies at the left.

\begin{figure}[!ht]
\centering
\includegraphics[width=0.65\textwidth]{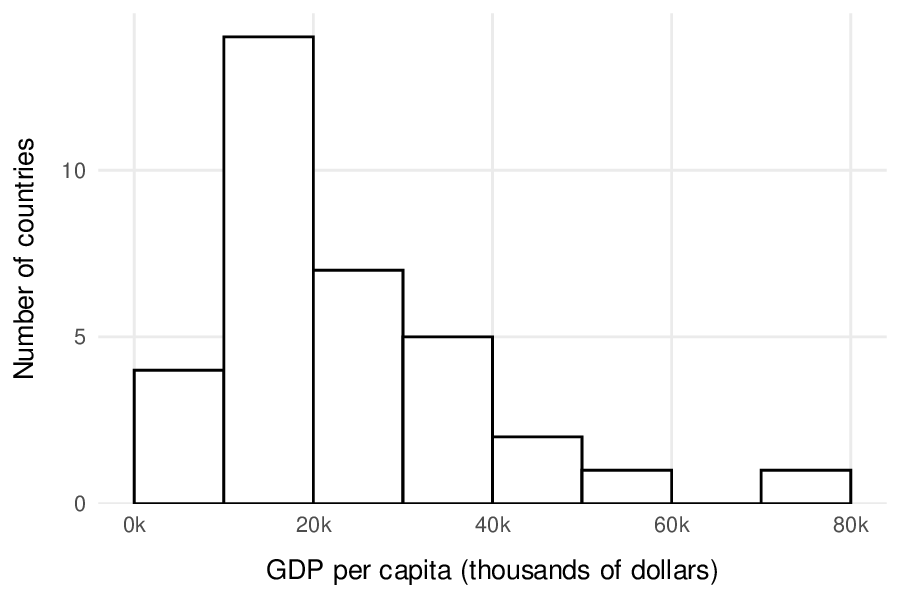}
\caption{Histogram of GDP per capita (in thousands of dollars) for $n = 34$ countries in the Americas, 2023.}
\label{fig:hist-gdp-americas}
\end{figure}

As mentioned in Section~\ref{integral_form}, since $J = A(2)$, the AHI index is a special case of the Atkinson family \citep{Atkinson1970} with inequality-aversion parameter $\varepsilon = 2$. In order to place the AHI index within the broader Atkinson family and to illustrate the role of $\varepsilon = 2$ as a natural reference point, we compute the plug-in Atkinson estimator
\[
\widehat{A}(\varepsilon)
=
1 - \frac{\displaystyle\left(\frac{1}{n}\sum_{i=1}^n X_i^{1-\varepsilon}\right)^{\!\frac{1}{1-\varepsilon}}}{\overline{X}}, \quad \varepsilon \neq 1,
\]
with the limiting formula $\widehat{A}(1) = 1 - \exp\!\bigl(n^{-1}\sum_{i=1}^n \log X_i\bigr)/\overline{X}$ as $\varepsilon \to 1$, for selected values $\varepsilon \in \{0.25, 0.5, 0.75, 1, 1.5, 2, 3, 5\}$. The plug-in estimator of the AHI index, given in \eqref{estimator-J}, is computed directly from the data as
\[
\widehat{J}
=
\widehat{A}(2)
=
1 - \frac{1}{\overline{X}\,\overline{(1/X)}}
=
0.3408,
\]
where $\overline{X} = 23.9088$ and $\overline{(1/X)} = 0.0635$ denote the sample arithmetic mean and the sample mean of the reciprocals, respectively. The corresponding harmonic mean is $H = 1/\overline{(1/X)} = 15.76$ thousand dollars, so the ratio $H/\overline{X} \approx 0.659$.

Table~\ref{tab:atkinson-comparison} reports the estimates $\widehat{A}(\varepsilon)$ for all selected values of $\varepsilon$, with the AHI row ($\varepsilon = 2$) highlighted. As established in Section~\ref{simulation_study}, the estimator $\widehat{A}(\varepsilon)$ is strictly increasing in $\varepsilon$, and this monotone ordering is clearly confirmed in the empirical results. For low values of $\varepsilon$, such as $\varepsilon = 0.25$, the index places little weight on the lower tail and yields $\widehat{A}(0.25) = 0.0422$, indicating very mild measured inequality under that criterion. As $\varepsilon$ increases, progressively greater weight is assigned to lower-income countries, and the estimated index rises steadily. At $\varepsilon = 2$, the AHI estimate $\widehat{J} = 0.3408$ captures a moderate level of inequality, in a balanced regime that penalizes lower-income economies without the extreme sensitivity associated with $\varepsilon > 2$. For $\varepsilon = 5$, the estimate reaches $\widehat{A}(5) = 0.7087$, reflecting the concentration of the distribution near the lower tail. These results confirm the interpretation of $\varepsilon = 2$ as a natural threshold within the Atkinson family, separating the regime of moderate inequality aversion from that of high aversion to low incomes.

\begin{table}[!ht]
\centering
\caption{Plug-in Atkinson estimates $\widehat{A}(\varepsilon)$ for selected values of $\varepsilon$, based on the GDP per capita data set for $n = 34$ countries in the Americas, 2023. The row $\varepsilon = 2$ corresponds to the AHI estimator $\widehat{J}$.}
\label{tab:atkinson-comparison}
\begin{tabular}{cc}
\toprule
$\varepsilon$ & $\widehat{A}(\varepsilon)$ \\
\midrule
$0.25$ & $0.0422$ \\
$0.50$ & $0.0841$ \\
$0.75$ & $0.1259$ \\
$1.00$ & $0.1679$ \\
$1.50$ & $0.2532$ \\
$\mathbf{2.00}$ & $\mathbf{0.3408}\ (= \widehat{J})$ \\
$3.00$ & $0.5099$ \\
$5.00$ & $0.7087$ \\
\bottomrule
\end{tabular}
\end{table}

\section{Concluding remarks}\label{conclusion}
It is widely known that measuring economic inequality requires indices that are at once interpretable, statistically tractable, and amenable to rigorous inference in finite samples. In this paper, we investigated the arithmetic-harmonic inequality (AHI) index as a bounded and scale-invariant measure of dispersion for positive random variables, defined through the interplay between the arithmetic and harmonic means. We derived analytical expressions for the index under the generalized inverse Gaussian family, including the inverse Gaussian and gamma distributions as important special cases. We introduced the plug-in estimator $\widehat{J}$ and established its large-sample properties, namely strong consistency, asymptotic normality, and a first-order bias expansion. The simulation results confirm that the bias and mean squared error of the estimator decrease with increasing sample size, corroborating the asymptotic theory developed in Section~\ref{large_sample}. The empirical application to GDP per capita data for $n=34$ countries in the Americas illustrates the practical utility of the index. Extensions of the AHI index to multivariate settings and mixture populations are currently under investigation.


\paragraph*{Acknowledgements}
The research was supported in part by CNPq and CAPES grants from the Brazilian government.

\paragraph*{Disclosure statement}
There are no conflicts of interest to disclose.


\bibliographystyle{apalike}


\end{document}